\newcommand{\bi}{\begin{itemize}}
\newcommand{\ei}{\end{itemize}}
\newcommand{\beq}{\begin{equation}}
\newcommand{\eeq}{\end{equation}}
\newcommand{\bqn}{\begin{eqnarray*}}
\newcommand{\eqn}{\end{eqnarray*}}
\newcommand{\ba}{\begin{array}}
\newcommand{\ea}{\end{array}}
\newcommand{\bs}{\begin{small}}
\newcommand{\es}{\end{small}}
\newtheorem{theorem}{Theorem}[section]
\newtheorem{proposition}[theorem]{Proposition}
\newenvironment{definition}[1][Definition]{\begin{trivlist}
\item[\hskip \labelsep {\bfseries #1}]}{\end{trivlist}}
\begin{document}

\title{False Discovery Rate Based Distributed Detection in the Presence of Byzantines}
\author{Aditya~Vempaty*,~\IEEEmembership{Student Member,~IEEE}, Priyadip Ray,~\IEEEmembership{Member,~IEEE}, Pramod~K.~Varshney,~\IEEEmembership{Fellow,~IEEE}%
\thanks{A. Vempaty and P. K. Varshney are with Department of EECS, Syracuse University, Syracuse, NY 13244. P. Ray is with G. S. Sanyal School of Telecommunications, Indian Institute of Technology, Kharagpur, West Bengal, India.
Email:  \{avempaty, varshney\}@syr.edu, priyadip.ray@gssst.iitkgp.ernet.in}}
\markboth{IEEE TRANSACTIONS ON AEROSPACE AND ELECTRONIC SYSTEMS (DRAFT)}{}

\date{}

\maketitle

\begin{abstract}
Recent literature has shown that the control of False Discovery Rate (FDR) for distributed detection in wireless sensor networks (WSNs) can provide substantial improvement in detection performance over conventional design methodologies. In this paper, we further investigate system design issues in FDR based distributed detection. We demonstrate that improved system design may be achieved by employing the Kolmogorov-Smirnov distance metric instead of the deflection coefficient, as originally proposed in \cite{ray_aes11}. We also analyze the performance of FDR based distributed detection in the presence of Byzantines.  Byzantines are malicious sensors which send falsified information to the Fusion Center (FC) to deteriorate system performance. We provide analytical and simulation results on the global detection probability as a function of the fraction of Byzantines in the network. It is observed that the detection performance degrades considerably when the fraction of Byzantines is large. Hence, we propose an adaptive algorithm at the FC which learns the Byzantines' behavior over time and  changes the FDR parameter to overcome the loss in detection performance. Detailed simulation results are provided to demonstrate the robustness of the proposed adaptive algorithm to Byzantine attacks in WSNs. 
\end{abstract}

\begin{keywords}
False Discovery Rate, Byzantine Attacks, Distributed Detection, Wireless Sensor Networks, Decision Fusion
\end{keywords}

\section{Introduction}
\label{intro}
In recent years, wireless sensor networks (WSNs) have been employed extensively to monitor a region of interest (ROI) for reliable detection/estimation/tracking of events \cite{akyildiz_commag02,niu_tsp06,Wang_TMC12,Veeravalli&Varshney_12}. In this paper, we focus on distributed target detection in WSNs, which has been a very active area of research in the recent past. In distributed detection \cite{Varshney:book}, due to power and bandwidth constraints, each sensor, instead of sending its raw data, sends quantized data (local decision) to a central observer or Fusion Center (FC). The FC combines these local decisions based on a fusion rule to come up with a global decision. There has been extensive research on distributed detection with fusion of local decisions \cite{Varshney:book}. Optimum fusion rules have been derived for the distributed detection problem under various assumptions \cite{Chair&Varshney:86AES,Viswanathan&Varshney:97Proc,Tsitsiklis:bookchapter,Kam-etal:AES1992,Drakopoulos&Lee:AES1991}. Most of these fusion rules require complete knowledge of the local sensor performance metrics, such as the probability of detection and false alarm. However, in large wireless sensor networks and under complex target signal models, the local sensor performance metrics may not be known or may be very  difficult to estimate. To address the scenario of unknown local sensor performance metrics, in  \cite{Niu&Varshney:05eurasip} \cite{Niu&etal:06ijif}, the authors have proposed employing the total number of detections (also referred to as the \textit{count statistic}) as a decision statistic at the FC. The fusion rule based on the count statistic leads to a decision rule where the sensor decisions are weighed equally, even though the SNR at each sensor may be different.

In general, obtaining the optimal local decision rules has been shown to be a very difficult problem \cite{Varshney:book} \cite{Willett-etal:00SP}. Under the conditional independence assumption, it has been shown that the use of identical local decision rules  is optimal under asymptotic conditions (i.e., the number of sensors $N \rightarrow \infty$) \cite{Tsitsiklis88}. Although the optimality of identical decision rules does not hold in general \cite{Tsitsiklis88} \cite{Varshney:book} \cite{Willett-etal:00SP}, design of  non-identical decision rules is computationally very complex and researchers have generally employed identical decision rules based on asymptotic optimality of identical decision rules. Recently, False Discovery Rate (FDR) based distributed detection has been proposed by Ermis and Saligrama \cite{ermis&saligrama10} and Ray and Varshney \cite{ray_ijdsn08} \cite{ray_aes11}. In \cite{ray_aes11}, the authors have proposed a scheme for distributed detection in WSNs based on the control of FDR. It has been shown  that under the assumption that the FC employs a test statistic which is linear in count (count here refers to the total number of detections) to reach the global decision, control of the FDR leads to non-identical local decision rules and provides significant improvement over the system with identical decision rules. This scheme provides significant improvement in the global detection performance \cite{ray_aes11}. In \cite{ray_aes11}, the authors suggest maximization of the deflection coefficient to obtain the FDR design parameter. However, as the count statistic is non-Gaussian in general, maximization of the deflection coefficient does not guarantee  optimal global performance \cite{pincinbono95}.  In this paper, we further consider the problem of FDR based distributed detection and demonstrate that improved system performance may be achieved via optimization of the Kolmogorov-Smirnov distance instead of the deflection coefficient. 

Most of the  research in the field of distributed detection has been carried out under the assumption of a secure network. Only in the recent past, researchers have investigated the problem of security threats\cite{HKD06} on sensor  networks. In this paper, we focus on one particular class of security attacks, known as the Byzantine attack\cite{marano_tsp09,rawat_tsp11,adds,Vempaty_TSP} (also referred to as the Data Falsification Attack). Byzantine attack involves malicious sensors within the network which send false information to the FC to disrupt the global decision making process. Byzantines intend to deteriorate the detection performance of the network by suitably modifying their decisions before transmission to the FC. Marano et al.\cite{marano_tsp09} first considered the distributed detection problem in the presence of Byzantines under the assumption that the Byzantines have perfect knowledge of the underlying true hypothesis. In \cite{marano_tsp09}, the authors also presented the optimal attacking distributions for the Byzantines such that the detection error exponent is minimized at the FC. Rawat et al. in \cite{rawat_tsp11} considered the more practical case when the Byzantines did not have the knowledge of the underlying true hypothesis and also proposed a simple algorithm at the FC to identify the Byzantines in the network. In \cite{gagrani_allerton11}, stochastic resonance \cite{HaoChen_tsp07} \cite{HaoChen_tsp08} was employed to mitigate the effect of Byzantines in a distributed inference network. In our previous work \cite{Vempaty_TSP}, we have also analyzed localization in WSNs in the presence of Byzantines and proposed mitigation techniques to make the Byzantines `ineffective'. In this paper, we study the performance of the FDR based distributed detection framework in the presence of Byzantines. It is observed that the global detection performance deteriorates  rapidly when the fraction of Byzantine sensors increases. Hence, we propose a novel algorithm at the FC, based on a modified Kolmogorov goodness-of-fit test, which detects the fraction of Byzantines present in the network and adaptively changes the FDR parameter to improve the detection performance. 

The key contributions of this paper are summarized as follows:
\bi 
\item We propose maximization of the Kolmogorov-Smirnov distance instead of the deflection coefficient to obtain the FDR design parameter  and demonstrate that it considerably improves system performance. 
\item We define a Byzantine attack model and show that the FDR value is controlled even in the presence of Byzantines; however the local sensor detection performance deteriorates considerably when the fraction of Byzantines is large. 
\item We next study the performance of FDR based distributed detection in the presence of Byzantine attacks and provide analytical and simulation results on the effect of Byzantines on global detection performance.
\item Finally, we propose an  algorithm  which adaptively changes the system  parameters by learning the Byzantines' behavior over time and demonstrate that the proposed algorithm  provides improved system performance in the presence of Byzantines.
\ei

The remainder of the paper is organized as follows: In Section \ref{prel}, we introduce the system model and the assumptions made in the paper. We also formally define False Discovery Rate (FDR) and briefly discuss the FDR based distributed detection scheme proposed in \cite{ray_aes11}. We propose some changes to the system design algorithm proposed in \cite{ray_aes11} and show the improvement in system performance in Section \ref{DM}. In Section \ref{Byz}, we show the performance degradation of FDR based schemes in the presence of Byzantines. We show that although the FDR value is maintained at the specified value, the power of the test reduces. We provide analytical results on the performance of FDR based distributed detection in the presence of Byzantines in Section \ref{perf}. We propose an approximation to the optimal parameter design approach which is computationally efficient and building on it, propose the adaptive distributed detection scheme in Section \ref{adap}. We conclude with a discussion on possible future directions in Section \ref{conc}.

\section{Preliminaries}
\label{prel}
\subsection{System Model}
\label{sys}
We consider a parallel fusion network where $N$ sensors are randomly deployed in the Region of Interest (ROI). Each sensor receives noisy target signals, $s_i$ (for $i=1,2,...,N$) and makes a decision $b_i$ regarding the presence/absence of the target, which is then transmitted to the FC. The FC makes a global decision ($b_0\in \{1,0\}$) regarding the presence/absence of the target using the transmitted local decisions $\left\{b_i\right\}_{i=1,\ldots,N}$. We assume that the channels between the local sensors and the FC are ideal (for results on distributed detection with imperfect channels, see \cite{chen&etal:06spm}, \cite{Gini-etal:98IEE}, and references therein). We consider the presence of $M=\alpha N$ ($0\leq\alpha\leq 1$) Byzantines in the network. These Byzantines' aim is to send falsified information to the FC and deteriorate the detection performance. Their model and attack strategy would be described later in Section \ref{Byz}.

As discussed earlier, due to the bandwidth and energy constraints, each local sensor sends a binary decision (0/1) to the FC based on a local hypothesis test. The local sensor's hypothesis test can be formulated as follows:
\begin{align}
\label{hyp}
H_0:	s_i &=n_i:	\text{Target absent}\\
H_1:	s_i &=a_i+n_i:	\text{Target present}
\end{align}
where $a_i=\sqrt{P_i}$ is the signal amplitude received at the $i^{th}$ sensor due to the presence of the target and $n_i \in \mathcal{N}(0,1)$ represent i.i.d. Gaussian noise. In this paper, we assume that the signal power received due to that emitted by the target drops to zero outside its finite radius of influence ($d_0$). The general signal model used is
\begin{equation}
P_i=g(d_i)
\end{equation}
where $P_i$ is the signal power received at the $i^{th}$ sensor which is at a distance $d_i$ from the target. As discussed above, the following $g(\cdot)$ has been used in this paper:
\begin{equation}
g(x)=
\begin{cases}
P_0,	& \text{if $0 \leq x \leq d_0$}\\
0,		& \text{if $x > d_0$}
\end{cases}
\end{equation}
The above model is adopted primarily for analytical convenience; however the results provided in this paper may be easily extended to more complex target signal models, such as where the target signal decays exponentially or in an inverse square manner with distance. 

The FC makes a global decision based on the vector of local decisions $\underline{b}=(b_1,\cdots,b_N)$ received from all the sensors. The binary hypothesis testing problem at the FC is
\begin{eqnarray}
G_0:	P(\underline{b};G_0):	\text{Target absent}\\
G_1:	P(\underline{b};G_1):	\text{Target present}
\end{eqnarray}
where, $P(\underline{b};G_0)$ and $P(\underline{b};G_1)$ are the distributions of $\underline{b}$ in the absence of the target ($G_0$) in the ROI and in the presence of the target ($G_1$) in the ROI respectively.

Conventionally, for the distributed detection problem, identical decision rules are used at the local sensors. However in  \cite{ray_aes11}, FDR based non-identical sensor decision rules have been proposed and  shown to be superior to identical decision rules. In the next subsections, we introduce the concept of FDR and provide a brief description of FDR based distributed detection.

\subsection{False Discovery Rate (FDR)}
\label{FDR}
In statistical hypothesis testing, when a family of tests (e.g., multiple binary hypothesis tests) are conducted, it is often meaningful to define an  error rate for the family of tests instead for an individual test. Family wide error rate (FWER) \cite{lehman08} is perhaps the most popular error rate used in the literature. It is defined as the probability of committing any type I error or false alarm. If the error rate for each test is $\beta$ then the FWER $\beta_F$ for $k$ tests is 
\begin{equation}
\label{beta_F}
\beta_F=P(F\geq 1)=1-(1-\beta)^k
\end{equation}
where $F$ is the total number of false alarms. As can be seen from (\ref{beta_F}), as the number of tests $k$ increases, $\beta$ remains constant but $\beta_F$ increases. This is a fundamental problem in Multiple Comparison Procedures (MCP) and classical comparison procedures aim to control this error measure. Bonferroni procedure \cite{Simes86} is a widely employed procedure to control the FWER at a desired rate, but it results in significantly reduced power (probability of detection). A radically different and more liberal approach proposed by Benjamini and Hochberg \cite{benj95} controls the FDR, defined as the fraction of false rejections among those hypotheses rejected. Formally, FDR is defined as the expected ratio of the number of false alarms (declared $H_1$ when $H_0$ is true) to the total number of detections ($H_1$ declarations consisting of both true and false detections). 

\begin{center}
\begin{table}[htb]
\caption{Notations to define FDR}
\begin{tabular}{|p{1.5in}|p{1.0in}p{1.0in}|p{1.0in}|}
\hline
	& Declared $H_0$ & Declared $H_1$ & Total\\
   \hline
   $H_0$ true & $W$  & $F$ & $N_0$\\
   $H_1$ true & $T$ & $S$ & $N-N_0$\\ 
   \hline
   Total & $N-R$ & $R$ & $N$\\
   \hline
\end{tabular}
\end{table}
\end{center}

From Table I, the ratio of false alarms to the total number of detections can be viewed as the random variable,
\begin{equation}
A=
\begin{cases}
\frac{F}{F+S}	& \text{if $F+S\neq0$}\\
0	& \text{if $F+S=0$}
\end{cases}
\end{equation}

FDR is defined to be the expectation of $A$,
\begin{equation}
FDR=E(A)
\end{equation}

This metric was proposed by Benjamini and Hochberg \cite{benj95} along with the following centralized algorithm to control FDR for multiple comparisons.

\paragraph*{Algorithm to control FDR}
Suppose $p_1,p_2,\cdots, p_N$ are the p-values for $N$ tests and $p_{(1)}, p_{(2)},\cdots,p_{(N)}$ denote the ordered p-values. The p-value for an observation $s_i$ is defined as
\begin{equation}
p_i =\int_{s_i}^\infty f_0(s)ds
\end{equation}
where, $f_0(s)$ is the probability density function of the observation under $H_0$. The algorithm by Benjamini and Hochberg \cite{benj95} which keeps the FDR below a value $\gamma$, is provided
below.
\bi
\item[1.] Calculate the p-values of all the observations and arrange them in ascending order.
\item[2.] Let $d$ be the largest $i$ for which $p_{(i)}\leq i\gamma/N$.
\item[3.] Declare all observations corresponding to $p_{(i)}$, $i =1,\cdots ,d$, as $H_1$.
\ei

Under the assumption of independence of test statistics corresponding to the true null hypotheses ($H_0$), this procedure controls the FDR at $\gamma$. Note that the FDR based decision making system looks for the largest index $i = d$ such that $p_{(d)}\leq d\gamma/N$. There may be other indices $i = l$, where $l < d$ for which the condition $p_{(l)}\leq l\gamma/N$ may be true, but the FDR based decision system looks for the largest value of $i$ for which this is true. The reason behind this, as discussed in \cite{benj95} and subsequently pointed out in \cite{ray_aes11}, is to achieve the largest probability of detection while constraining the FDR to less than or equal to $\gamma$. Further discussion including the proof of this algorithm is omitted for the brevity of the paper and may be found in \cite{benj95}.

Application of the above algorithm for a multi-sensor signal detection framework requires the centralized ordering of p-values at the FC. As discussed earlier, for a distributed detection problem, due to the energy and communication constraints, local sensors send quantized information, and often only send one bit to the FC. Hence, a distributed ordering scheme of the p-values is necessary. Ray \& Varshney \cite{ray_aes11} have proposed a decentralized FDR procedure based on a feedback mechanism which in effect orders p-values at the sensors. This distributed algorithm achieves the same performance as the Benjamini and Hochberg centralized procedure. The algorithm and further discussion may be found in \cite{ray_aes11}. 


\section{FDR based Distributed Detection}
\label{DM}

In this section, we summarize the design guidelines for FDR based distributed detection proposed in \cite{ray_aes11} and show that system design needs to be re-visited. The design parameters in a FDR based distributed detection system \cite{ray_aes11} are $\gamma$ and $T_{FDR}$, where $\gamma$ is the FDR parameter and $T_{FDR}$ is the global threshold. 
The system-wide probability of false alarm for the FDR based distributed detection system is given by, 
\begin{equation}
P_{FA}=P(\Delta > T_{FDR};G_0)+\kappa P(\Delta =T_{FDR};G_0)
\end{equation}
where $\kappa$ is the randomization parameter. Similarly, the system-wide probability of detection is given by
\begin{equation}
P_D=P(\Delta > T_{FDR};G_1)+\kappa P(\Delta =T_{FDR};G_1)
\end{equation}
where the probability mass function (pmf) of the count statistic is given by Propositions 2 and 4 of \cite{ray_aes11}. 

It has been shown in \cite{ray_aes11} that determination of the optimal value of the FDR parameter $\gamma$, where optimality is with respect to system-level detection performance is a difficult problem. Optimizing the ROC, i.e., finding the optimal values of the parameter by maximizing the detection probability for a fixed false alarm probability, via simulation or numerical computation is very complex (see \cite{ray_aes11} for further discussion). A computationally less intensive approach is to use distance measure based optimization for system design. Motivated by this, in \cite{ray_aes11}, optimization of deflection coefficient of the count statistic ($\Delta$) was employed to find the optimal value of $\gamma$. Intuitively, an increased deflection coefficient generally implies greater separation between the pmfs of the count statistic under global hypotheses $G_0$ and $G_1$ and is likely to lead to better detector performance. Also, the asymptotic distribution of the count statistic is Gaussian for which it is known that maximization of the deflection coefficient leads to an optimal detector \cite{pincinbono95}. However, in this paper, we show that maximization of the deflection coefficient may not be the best design criterion for FDR based detection system under non-asymptotic conditions. Since the distribution of the count statistic ($\Delta$) is non-Gaussian in general, it is likely that the deflection coefficient fails to characterize its performance completely. We study the performance of several candidate distance measures (such as Kullback-Leibler Divergence, Bhattacharya Distance and Kolmogorov-Smirnov Distance) for system design.

As a motivating example, we perform the following simulation. Let $N=20$ sensors be randomly distributed within the circular ROI of radius $R=10$. The pdf of the sensor locations $r$ ($r$ is measured from the target location) is given by $f_R(r)=2r/R^2, 0\leq r\leq R$. The optimal value of parameter $\gamma$ (which maximizes the detection probability) obtained for a target model with power $P_0=5$ and radius of influence $d_0=5$ depends on the system-wide probability of false alarm, $P_{FA}$. For $P_{FA}=0.1$, the optimal parameter value that maximizes $P_D$ is found to be $\gamma_{opt}=0.25$. However, when deflection coefficient is used, we get the optimal parameter value as $\gamma_{opt}^d=0.008$. The optimization has been done numerically through simulations and the plots have been omitted for the sake of brevity of the paper.

%

%

The above example shows that the deflection coefficient does not always yield the optimal value of the parameter and as shown can deviate substantially from the optimal value. This makes it necessary to come up with a distance measure which is computationally more efficient than using ROC based optimization, and provides a better approximation to the optimal solution. In this paper, we present some comparative performance results between four candidate measures: Deflection Coefficient \cite{Niu&Varshney:05eurasip}, Kullback-Leibler Divergence \cite{Lin_fusion06}, Bhattacharya Distance and Kolmogorov-Smirnov Distance \cite{rachev91}. Table II compares the optimal parameter values found by the optimization of each of the distance measures against the true optimal value found by ROC based optimization ($P_D$ maximization while fixing $P_{FA}=0.1$) for FDR based threshold design scheme.

\begin{table}[htb]
\caption{Threshold Comparison using different distance measures}
\begin{center}
\begin{tabular}{|p{2.25in}|p{2.0in}|}
\hline
	Metrics & FDR based threshold scheme ($\gamma$)\\
   \hline
   ROC based optimization  & 0.25\\
   Deflection Coefficient optimization & 0.008\\ 
   Kullback-Leibler Divergence optimization & 0.15\\
   Bhattacharya Distance optimization & 0.2\\
   Kolmogorov-Smirnov Distance optimization & 0.2\\
   \hline
\end{tabular}
\end{center}
\label{FDR_table}
\end{table}


For the example considered, as can be seen from Table \ref{FDR_table}, the Kolmogorov-Smirnov Distance (KSD) best approximates the optimal parameter value compared to other candidate measures for the FDR based scheme. We ran a large number of simulation experiments and similar results were obtained. The deflection coefficient fails to find the optimal parameter value as it only looks at the first and the second moments of the distributions which is not sufficient when the underlying distribution is non-Gaussian. In our context, the count statistic ($\Delta$) is non-Gaussian and is a discrete RV and, therefore, deflection coefficient does not characterize the distribution of count statistic. In the remainder of this section, we formally define the Kolmogorov-Smirnov distance and perform some empirical studies based on K-S distance to find the optimal parameter value for the FDR based distributed detection scheme.

\begin{definition} The Kolmogorov-Smirnov distance (K-S distance) is defined as the maximum value of the absolute difference between two cumulative distribution functions \cite{rachev91}.
\end{definition}

The Kolmogorov Smirnov distance measures the distance between the empirical distribution function of the sample data and the cumulative distribution function of the reference distribution. The K-S distance is distribution free, i.e., it makes no assumptions about the underlying data distributions. The K-S distance between two cdfs $F(\cdot)$ and $G(\cdot)$ is defined as

\begin{equation}
KSD(F(\cdot),G(\cdot))=\sup_{0\leq x\leq 1}|F(x)-G(x)|
\end{equation}

In our context, $F(\cdot)$ and $G(\cdot)$ represent the cdfs of the count statistic ($\Delta$) under global hypotheses $G_0$ and $G_1$. The optimal parameter value of the local decision threshold parameter ($\gamma$ for FDR based scheme) is found by maximizing the K-S distance. 

%

It is important to note here that the optimal parameter value depends on the system requirement of system-wide probability of false alarm ($P_{FA}$). When $P_{FA}=0.2$ is used, the optimal parameter value based on ROC optimization comes out to be $\gamma=0.2$ which again is close to the optimal parameter value found by using the K-S distance measure for optimization. 
Fig. \ref{ROC_KSD_deff} shows the ROCs obtained by using the parameters obtained by K-S distance optimization and deflection coefficient optimization for the FDR based scheme. Note that in Fig. \ref{ROC_KSD_deff}, the ROC obtained by deflection coefficient optimization is not as good as the ROC obtained by using K-S distance optimization.


\begin{figure}[htb]
\centering
\includegraphics[width = 3.5in,height=!]{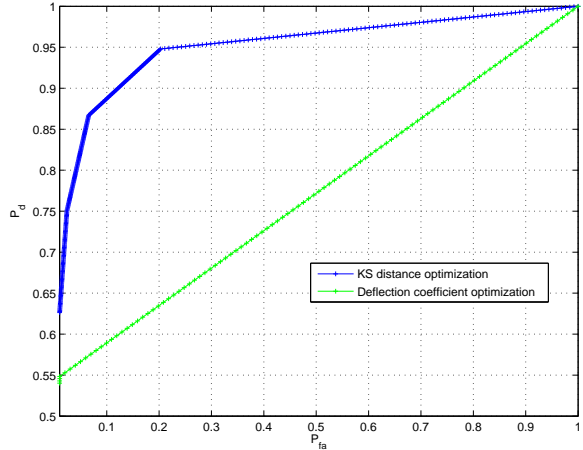}
\caption{ROC using the optimal parameter values found using K-S distance and deflection coefficient}
\label{ROC_KSD_deff}
\end{figure}


From empirical studies, K-S distance optimization seems to provide a good approximation to the optimal value. We hypothesize that since K-S distance measures the overlap between the distributions and is distribution free in nature, it provides better results compared to deflection coefficient maximization. However, to find the best distance measure to use is an interesting but difficult problem to solve considering the large number of candidate measures available in the literature (Ali-Silvey Distance Measures \cite{basseville89}). We leave this for future work and for the remaining part of the paper, we use K-S distance or ROC based optimization to find the optimal value of the local threshold parameter.

\section{Control of FDR in the presence of Byzantines}
\label{Byz}
In this section, we consider the problem of FDR based distributed detection in the presence of Byzantines. We first show the effect of Byzantines on the control of FDR in multiple comparison problems and  in subsequent sections analyze the effect of Byzantines on FDR based distributed detection.

Byzantines are the local sensors which send falsified information to the FC to deteriorate system performance. Since control of FDR is based on p-values, the Byzantines' attack strategy would be to report the decision based on a transformed p-value denoted by $q_i=h(p_i)$, where $h(\cdot)$ is a transformation used by the Byzantines and $p_i$ is the true p-value of the $i^{th}$ sensor. The transformation $h(\cdot)$ needs to satisfy the properties listed below:
\bi
\item[1.] $h(\cdot)$ should be a function whose domain and range are $[0,1]$, i.e., $h:[0,1]\to[0,1]$
\item[2.] $h(\cdot)$ should be a decreasing function. This property is essential since the Byzantines' aim is to deteriorate detection performance \cite{marano_tsp09}. As the p-value represents the `confidence' of the target being absent, they would like to report falsified information by reversing it. A decreasing function ensures a higher q-value for a lower p-value and vice-versa.
\ei

One possible transformation is the linearly decreasing function $h(p)=1-p$, which is analogous to flipping of the local decision. The transformation corresponding to flipping has been shown to be the optimal attack for Byzantines in distributed detection using identical local decision rules \cite{rawat_tsp11} and target localization with quantized data \cite{Vempaty_TSP}. Also, as shown later in this section, the above transformation ensures that the Byzantines attack the network in a covert manner. This attack strategy reduces the detection performance of the network (refer to Fig. \ref{det_H01}) while controlling the FDR value at the pre-determined threshold $\gamma$ (refer to Fig. \ref{FDR_H01}). Hence, the attack strategy $h(p)=1-p$ allows the Byzantines to reduce the network performance while not changing their behavior in a manner that can be detected by observers such as the FC.

In the remainder of this paper, we use the above transformation to model the Byzantine attack. Therefore, if $p_i$ represents the true p-value of the $i^{th}$ sensor, then the transformed p-values are given by
\begin{equation}
\label{byz_model}
q_i=
\begin{cases}
p_i,	& \text{if $i^{th}$ sensor is honest}\\
1-p_i,	& \text{if $i^{th}$ sensor is Byzantine}
\end{cases}
\end{equation}

In the rest of the section, we show the effect of Byzantines on the control of FDR. As mentioned previously, an important aspect of FDR based distributed detection is the control of FDR value at the pre-determined threshold $\gamma$. The FDR control algorithm provided earlier in Section \ref{FDR}, controls the FDR value at $\gamma$ when the true Hypothesis is $H_0$ and at a value less than or equal to $\gamma$ when the true Hypothesis is $H_1$. We now prove that the FDR value is controlled even in the presence of Byzantines.

\begin{proposition}
Let $N$ sensors be randomly deployed in the ROI. The local sensors report local decisions to the FC based on their p-values and the FC makes the final decision regarding the presence/absence of the target using these local decisions. Let there be $M$ Byzantines in the network which transform the p-values according to \eqref{byz_model} and report decisions based on the transformed p-values. For independent local sensors under global hypothesis $G_0$, the FDR control algorithm proposed in \cite{benj95} controls the FDR value at the pre-determined threshold $\gamma$, even in the presence of Byzantines. 
\end{proposition}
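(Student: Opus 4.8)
The plan is to reduce the statement to the classical correctness proof of the Benjamini--Hochberg procedure \cite{benj95}, by showing that under $G_0$ the vector of transformed p-values $(q_1,\dots,q_N)$ actually processed by the FC has exactly the same joint distribution as the vector of honest p-values would have in an attack-free network. First I would recall that under $G_0$ every sensor --- honest or Byzantine --- observes noise alone, $s_i = n_i$ with $n_i \sim \mathcal{N}(0,1)$; hence each true p-value $p_i = \int_{s_i}^{\infty} f_0(s)\,ds$ is the image of $s_i$ under a continuous, strictly decreasing map onto $(0,1)$, so by the probability integral transform $p_i \sim \mathrm{Unif}[0,1]$, and by the conditional-independence assumption the $p_1,\dots,p_N$ are mutually independent.

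The key step is the elementary observation that the Byzantine transformation $h(p) = 1-p$ preserves the uniform law: if $P \sim \mathrm{Unif}[0,1]$ then $1-P \sim \mathrm{Unif}[0,1]$, since $\mathrm{Unif}[0,1]$ is symmetric about $1/2$. Because the rule in \eqref{byz_model} is applied coordinatewise (each sensor's honest/Byzantine label being fixed in advance), independence is preserved as well. Hence, under $G_0$, $q_1,\dots,q_N$ are i.i.d. $\mathrm{Unif}[0,1]$, which is precisely the joint law of the p-values in an all-honest network. I would stress that this argument does not invoke the value of $M$ anywhere: it holds verbatim for any number of Byzantines, up to and including $M=N$.

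It then remains only to apply the theorem underlying the FDR control algorithm. The FC runs the step-up procedure at level $\gamma$ on the received values, which --- by the previous paragraph --- satisfy exactly the independence-plus-uniformity hypothesis under which \cite{benj95} guarantees $\mathrm{FDR} \le (N_0/N)\,\gamma$, with equality when the true-null p-values are i.i.d. uniform, as is the case here; since all $N$ null hypotheses are true under $G_0$ we have $N_0 = N$ and hence $\mathrm{FDR} = \gamma$, proving the claim.

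The main (and essentially only) obstacle is establishing the distributional invariance of the p-values under the attack --- that $h(p)=1-p$ preserves both marginal uniformity and the product structure conditioned on $G_0$; everything else is a citation. It is worth noting explicitly that this invariance is special to $G_0$: under $G_1$ a Byzantine sensor that genuinely senses the target has a true p-value $p_i$ that is stochastically smaller than uniform, so $q_i = 1-p_i$ is stochastically larger than uniform and no longer follows the null distribution, which is exactly the reason --- made precise in the following sections --- that the attack still erodes the power of the test even while the FDR stays controlled.
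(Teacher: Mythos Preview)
Your proposal is correct and follows essentially the same route as the paper's proof: both observe that under $G_0$ the true p-values are i.i.d.\ uniform by the probability integral transform, that the Byzantine map $p\mapsto 1-p$ preserves the uniform law, and hence that the Benjamini--Hochberg guarantee applies to the transformed vector $(q_1,\dots,q_N)$ exactly as it would in an attack-free network. Your write-up is in fact slightly more explicit than the paper's (you spell out the preservation of independence and the $N_0=N$ conclusion), but the argument is the same.
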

\begin{proof}
In order to prove this proposition, we need to find the distribution of p-values under $H_0$ in the presence of Byzantines. The Byzantines' model used in this paper is that of transforming the p-values according to \eqref{byz_model}. 
When the Byzantines are absent, the true p-value represents the cumulative distribution function under $H_0$ and, therefore, follows uniform distribution. Since the Byzantines' effect is a transformation given by \eqref{byz_model}, the distribution of the falsified p-values under $H_0$ can be found by the transformation of random variables $V=1-U$. Due to probability integral transform, the falsified p-values also follow uniform distribution under null hypothesis.


The proof of this proposition then follows from the straightforward observation that the FDR algorithm proposed by Benjamini and Hochberg \cite{benj95} controls the FDR value for any configuration (distribution) of the false null hypothesis. The condition that p-values are uniformly distributed under true null hypothesis is still satisfied and since the proof does not depend on the p-values' distribution under $H_1$, the FDR value is controlled at the pre-determined threshold.
\end{proof}

%
%
%

We now validate via simulations that the FDR value is controlled at $\gamma$ even in the presence of Byzantines. Fig. \ref{FDR_H01} shows the FDR value under both the hypotheses $G_0$ and $G_1$ with varying fraction of Byzantines ($\alpha$) present in the network. Let us consider a distributed detection system with the parameters: $N=20$, $R=10$, $d_0=5$, $P_0=5$ and the FDR parameter $\gamma=0.25$. The simulation results are for $5 \times 10^4$ Monte-Carlo runs. As can be seen from the figure, the FDR value is maintained at $\gamma$ under $G_0$ and at a value less than or equal to $\gamma$ under $G_1$ even in the presence of Byzantines.

%


\begin{figure}[htb]
\centering
\includegraphics[width = 3.5in,height=!]{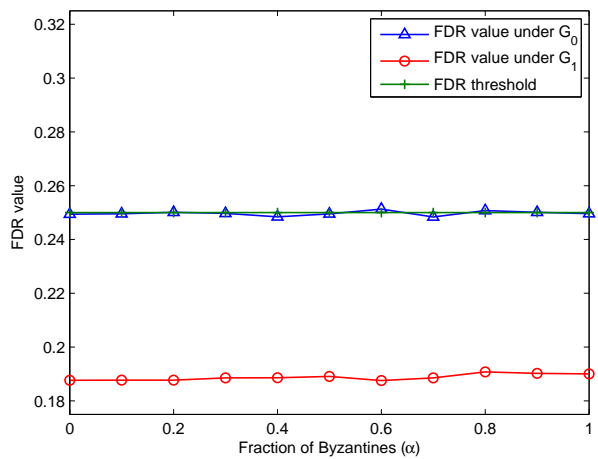}
\caption{FDR value against the fraction of Byzantines}
\label{FDR_H01}
\end{figure}


This ineffectiveness of Byzantines on the FDR value is expected because the effect of Byzantines changes the order of the transformed p-values (q-values) and the largest threshold crossing index would be different as compared to the largest threshold crossing index for the original p-values. In other words, in the presence of a target, most of the true p-values are small and therefore the threshold crossing would be closer to the right extremal resulting in a high number of detections as depicted by an example in Fig. \ref{threshold_true}. In the presence of Byzantines, the p-values get transformed as defined by \eqref{byz_model} and the transformed p-values (q-values) become larger and the threshold crossing shifts to the left, as shown in Fig. \ref{threshold_false}. This reduces the number of detections and it is equivalent to looking at an earlier threshold crossing index for the true p-values. As pointed out in Section \ref{FDR}, the FDR algorithm looks at the largest index satisfying $p_{(i)} \leq i\gamma/N$ to maximize the power of the test. Observe that when $\alpha=1$, i.e., all the sensors are Byzantines, the order of the q-values is reversed as compared to the p-values and the FDR algorithm based on the q-values ends up looking at the smallest index of p-values satisfying $p_{(i)} \leq i\gamma/N$ rather than the largest index. Under this observation, we conjecture that for $0 < \alpha < 1$, the FDR algorithm based on the q-values ends up looking at an index of p-value between the largest and the smallest indices satisfying $p_{(i)} \leq i\gamma/N$. Since the proof of control of FDR does not depend on whether it is the largest threshold crossing index or not, the FDR value is maintained at the required threshold. However, the power of the test degrades in the presence of Byzantines. 

%

\begin{figure*}[!t]
            	\centerline
            	{
                	\subfloat[True p-values resulting in large number of detections ($\Delta$)]{\includegraphics[width=3in]{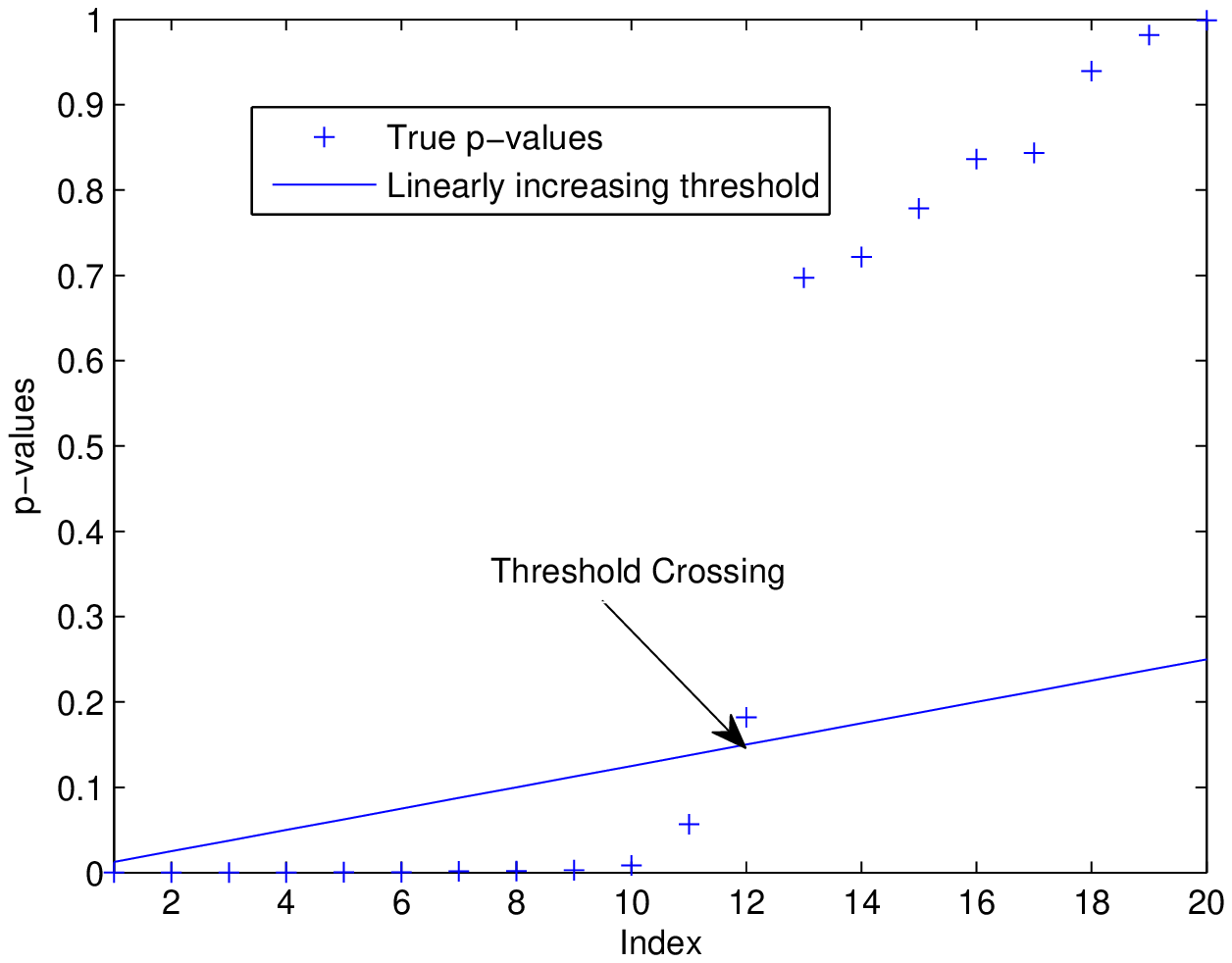}%
                	\label{threshold_true}}
                	\hfil
                	\subfloat[Falsified p-values in the presence of Byzantines resulting in reduction in $\Delta$]{\includegraphics[width=3in]{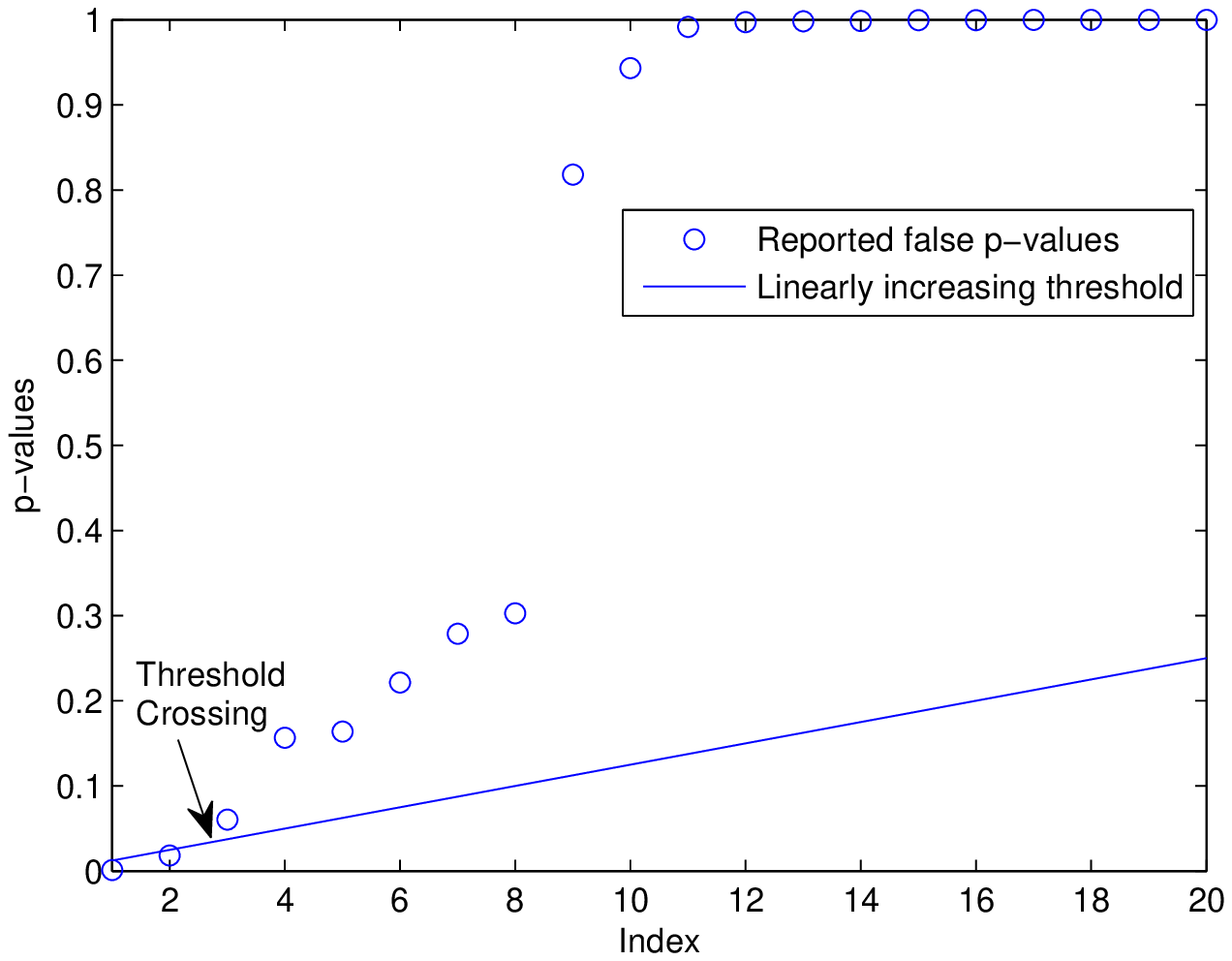}%
                	\label{threshold_false}}
            	}
            	\caption{p-values against linearly increasing threshold in the presence of target}
            	\label{p-values}
        	\end{figure*}


In the above discussion, we have shown that the FDR value is not affected by the presence of Byzantines. However, it was also pointed out that the FDR control algorithm for the transformed p-values (q-values) is equivalent to looking at an earlier index on the true p-values. In the presence of Byzantines, the number of detections is reduced and, therefore, the distribution of count (number of detections) under $G_1$ is now closer to the distribution of the count statistic under $G_0$ which remains unchanged in the presence of Byzantines. This makes it difficult to distinguish between the two hypotheses. Fig. \ref{det_H01} shows the effect of Byzantines on the average number of detections reported by the local sensors. The simulation parameters are the same as before: $N=20$, $R=10$, $d_0=5$, $P_0=5$ and the FDR parameter $\gamma=0.25$. The plots are for $5 \times 10^4$ Monte-Carlo runs and it can be observed that, in the presence of Byzantines, the number of detections remains the same under $G_0$ but reduces under $G_1$. When $\alpha$ approaches `1', all the sensors are Byzantines and their local hypotheses are effectively reversed. Therefore, at $\alpha=1$, the average number of detections is lower under $G_1$ than the value under $G_0$.

\begin{figure}[htb]
\centering
\includegraphics[width = 3.5in,height=!]{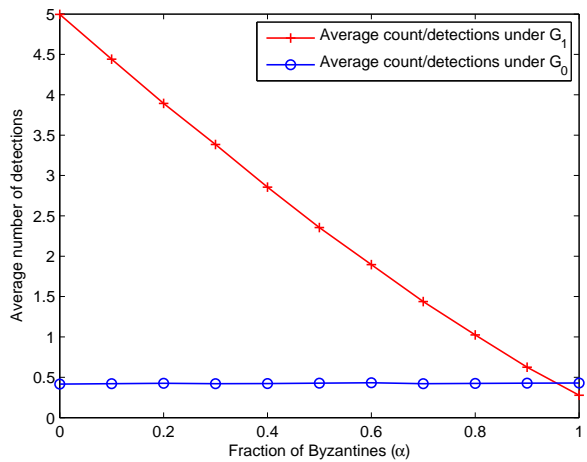}
\caption{Average number of detections against the fraction of Byzantines}
\label{det_H01}
\end{figure}

In the following section, we explore this intuitive observation that the Byzantines bring the distribution of the count statistic under $G_1$ closer to its distribution under $G_0$ in the context of FDR based distributed detection. We show analytically the effect of Byzantines on the count statistic and derive the distributions of the count statistic under $G_0$ and $G_1$.

\section{FDR based Distributed detection in the presence of Byzantines}
\label{perf}
In order to understand the behavior of FDR based distributed detection scheme in the presence of Byzantines, we require the knowledge regarding the pmf of the count statistic ($\Delta$) in the presence of Byzantines. These results are provided in the rest of this section.

\subsection{Probability mass function of the Count Statistic ($\Delta$)}
\label{pmf}
Let the observed p-values be denoted by the random variables $\left\{U_i\right\}_{i=1,\ldots,N}$ and the transformed p-values (q-values) be denoted by the transformed random variables $\left\{V_i\right\}_{i=1,\ldots,N}$ where the transformation is as follows
\begin{equation}
V_i=
\begin{cases}
U_i	&	\text{if $i$ is an honest sensor}\\
1-U_i	&	\text{if $i$ is a Byzantine sensor}
\end{cases}
\end{equation}


\begin{proposition}
\label{prop_FDR_G0}
The probability of $\Delta$= i local false alarms (count) for $N$ sensors containing $M=\alpha N$ Byzantines in the ROI, and control of FDR at $\gamma$ under $G_0$ (absence of target in the ROI) is given by
\begin{equation}
P(\Delta =i;G_0)=\binom{N}{i}(1-\gamma)\Big(\frac{i\gamma}{N}\Big)^i\Big(1-\frac{i\gamma}{N}\Big)^{N-i-1}	
\end{equation}
\end{proposition}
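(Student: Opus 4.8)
The plan is to first reduce to the Byzantine‑free situation and then (for completeness) establish the combinatorial identity. The key point is that under $G_0$ the target is absent, so each true p‑value $U_i$ is a cdf evaluated at its own argument and is therefore $\mathrm{Unif}[0,1]$, and the $U_i$ are independent; the Byzantine map $u\mapsto 1-u$ acts deterministically on each coordinate and sends $\mathrm{Unif}[0,1]$ to $\mathrm{Unif}[0,1]$, so the reported values $V_1,\ldots,V_N$ are again i.i.d.\ $\mathrm{Unif}[0,1]$, whatever the fraction $\alpha$ (this is exactly the probability‑integral‑transform argument of Proposition~1). Hence, under $G_0$, the count $\Delta$ returned by the Benjamini--Hochberg rule has the same law as in a network with no Byzantines, and it suffices to compute the pmf of the number of BH rejections for $N$ i.i.d.\ uniform p‑values. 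One route is then simply to quote Proposition~2 of \cite{ray_aes11}; below I also sketch the short self‑contained derivation.

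Write $p_{(1)}\le\cdots\le p_{(N)}$ for the ordered (uniform) p‑values; by definition $\Delta$ is the largest $k$ with $p_{(k)}\le k\gamma/N$ (and $\Delta=0$ if there is none), and when $\Delta=i$ the rejected p‑values are exactly the $i$ smallest. First I would partition $\{\Delta=i\}$ over the set $S$ of rejected indices:
\[
\{\Delta=i\}=\bigcup_{|S|=i}\Big(\{\max_{k\in S}p_k\le i\gamma/N\}\cap\{(j\text{-th smallest of }S^c)>(i+j)\gamma/N,\ j=1,\ldots,N-i\}\Big),
\]
a disjoint union. The second event already forces $\min_{k\in S^c}p_k>(i+1)\gamma/N>i\gamma/N$, so on the intersection all of $S$ lies below all of $S^c$ (making $S$ genuinely the bottom $i$), index $i$ is a crossing, and no index above $i$ crosses; conversely the intersection is exactly $\{\Delta=i,\ \text{rejected set}=S\}$. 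Since the two events involve disjoint blocks of the independent $p_k$'s, each summand equals $(i\gamma/N)^i\,Q_{N-i}$, where $Q_m:=\Pr\big(W_{(j)}>(i+j)\gamma/N,\ j=1,\ldots,m\big)$ for $m$ i.i.d.\ uniforms $W$; summing the $\binom{N}{i}$ equal summands gives $\Pr(\Delta=i)=\binom{N}{i}(i\gamma/N)^i\,Q_{N-i}$.

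It remains to evaluate $Q_{N-i}$, and this is the one genuinely non‑routine step. I would turn the affine boundary into a linear one: the event $\{W_{(j)}>(i+j)\gamma/N\ \forall j\}$ forces every $W_k>i\gamma/N$, so factor out $\Pr(W_k>i\gamma/N\ \forall k)=(1-i\gamma/N)^{N-i}$ and rescale $\widetilde W_k=(W_k-i\gamma/N)/(1-i\gamma/N)$, which are then i.i.d.\ $\mathrm{Unif}[0,1]$ and satisfy $\widetilde W_{(j)}>j\gamma/(N-i\gamma)=ja/(N-i)$ with $a=(N-i)\gamma/(N-i\gamma)\in[0,1]$. Daniels' classical order‑statistics identity $\Pr\big(U_{(j)}>ja/n\ \forall j=1,\ldots,n\big)=1-a$ (itself the $i=0$ case of the very formula being proved, and provable by conditioning on the minimum or by the cycle lemma) then gives $1-a=(1-\gamma)/(1-i\gamma/N)$, whence $Q_{N-i}=(1-i\gamma/N)^{N-i}\cdot(1-\gamma)/(1-i\gamma/N)=(1-\gamma)(1-i\gamma/N)^{N-i-1}$; substituting back yields the stated pmf. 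I expect the main obstacle to be exactly this last step (citing or reproving Daniels' identity and making the rescaling rigorous); one should also check the boundary cases $i=0$ (which must reproduce $\Pr(\Delta=0)=1-\gamma$) and $i=N$ (which gives $\gamma^N$, the exponent $N-i-1=-1$ being only formal and cancelling the factor $1-\gamma=1-i\gamma/N$), and note that continuity of the p‑values rules out ties, so the randomization parameter plays no role at this stage.
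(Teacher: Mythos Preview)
Your argument is correct. The reduction step---observing that under $G_0$ the map $u\mapsto 1-u$ preserves the i.i.d.\ $\mathrm{Unif}[0,1]$ law of the p-values, so the presence of Byzantines is immaterial---is exactly what the paper does (it is the same probability-integral-transform observation already used in Proposition~1). Where you and the paper part ways is only in how the Byzantine-free pmf is then obtained: the paper simply cites Finner and Roters, who derive the formula via Dempster's barrier-crossing result for uniform order statistics, whereas you give a self-contained derivation by partitioning over the rejected index set, factoring by independence, and then evaluating the tail probability $Q_{N-i}$ through an affine rescaling and Daniels' identity $\Pr(U_{(j)}>ja/n\ \forall j)=1-a$. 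Both routes rest on the same circle of classical uniform-order-statistics identities (Daniels' theorem and Dempster's formula are closely related), so this is a difference of presentation rather than of substance; your version has the advantage of being self-contained and of making the boundary cases $i=0$ and $i=N$ explicit, at the cost of having to invoke (or reprove) Daniels' identity, which is itself nontrivial.
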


In the absence of a target, the p-values of both the Byzantines and the honest nodes are uniformly distributed, that is both $U_i$ and $V_i$ are uniformly distributed. Therefore, the result remains the same as derived by Finner and Roters \cite{finner02} using Dempster's formula for barrier crossing for uniform random variables, irrespective of the presence of Byzantines. Similarly, the asymptotic distribution can be found as

\begin{equation}
\lim_{N \to \infty}{P(\Delta =i; G_0)}=\frac{i^i}{i!}(1-\gamma)\gamma^i \exp{(-i\gamma)}
\end{equation}

\begin{proposition}
\label{prop_FDR_G1}
The probability of $\Delta =i$ local detections (count) for $N$ sensors containing $M=\alpha N$ Byzantine sensors in the ROI, and control of FDR at $\gamma$ under $G_1$ (presence of target in ROI) is given by

\begin{multline}
\label{Eq:FDR_G1}
P(\Delta = i; G_1)=\frac{1}{\binom{N}{M}} \sum\int_{v_{N,N}=\gamma}^1 \cdots \int_{v_{i+1,N}=(i+1)\gamma/N}^{v_{i+2,N}}\int_{v_{i,N}=0}^{i\gamma /N}\\
\cdots \int_{v_{1,N}=0}^{v_{2,N}}N!f_{V_1}(v_1)f_{V_2}(v_2)\cdots f_{V_N}(v_N)dv_{1,N}dv_{2,N}\cdots dv_{N,N}
\end{multline}
Also, asymptotically, i.e., for large N,
\begin{equation}
P(\Delta = i; G_1)= \sum_{k=\max{(0,M-N+i)}}^{\min{(M,i)}}\binom{M}{k}\binom{N-M}{i-k}(\bar{p}_D^B)^k(1-\bar{p}_D^B)^{M-k}(\bar{p}_D^H)^{i-k}(1-\bar{p}_D^H)^{(N-M)-(i-k)}
\end{equation}
or 
\begin{equation}
P(\Delta = i; G_1)= \binom{N}{i}(\bar{p}_D)^i(1-\bar{p}_D)^{N-i}
\end{equation}
where $\bar{p}_D^H$ is the average probability of reporting `1' under $G_1$ for an honest local sensor, $\bar{p}_D^B$ is the average probability of reporting `1' under $G_1$ for a Byzantine local sensor and $\bar{p}_D=\alpha\bar{p}_D^B +(1-\alpha)\bar{p}_D^H $ is the average probability of reporting `1' under $G_1$.
\end{proposition}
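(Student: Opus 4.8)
The plan is to run the same barrier‑crossing argument used for Proposition~\ref{prop_FDR_G0} and for Propositions~2 and~4 of \cite{ray_aes11}, but now applied to the transformed (q‑value) densities. First I would recall that the Benjamini–Hochberg algorithm outputs $\Delta=d$, where $d$ is the largest index $i$ with $v_{(i)}\le i\gamma/N$ and $v_{(1)}\le\cdots\le v_{(N)}$ are the ordered q‑values. Hence the event $\{\Delta=i\}$ is exactly
\[
\{\,v_{(i)}\le i\gamma/N\,\}\cap\bigcap_{j=i+1}^{N}\{\,v_{(j)}>j\gamma/N\,\},
\]
and, crucially, there is \emph{no} individual barrier constraint on $v_{(1)},\dots,v_{(i-1)}$ — since the algorithm keeps the \emph{largest} crossing index, the lower ranks only need to be ordered below $v_{(i)}$, which is why in \eqref{Eq:FDR_G1} the innermost variables run over $[0,v_{(k+1)}]$ while $v_{(i)}$ runs over $[0,i\gamma/N]$ and $v_{(j)}$ over $[j\gamma/N,v_{(j+1)}]$ for $j>i$. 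Conditioning on which size‑$M$ subset of sensors is Byzantine — one of $\binom{N}{M}$ configurations, taken equally likely (which is legitimate by the exchangeability of the i.i.d. sensor positions) — the q‑values $V_1,\dots,V_N$ are independent with marginal densities $f_{V_j}$ (namely $f_{U_j}$ for an honest sensor and $f_{U_j}(1-\cdot)$ for a Byzantine), so integrating the joint order‑statistic density over the region above — the Dempster barrier‑crossing computation, exactly as in \cite{finner02} — yields the iterated integral, and averaging over the $\binom{N}{M}$ configurations with weight $1/\binom{N}{M}$ gives the stated formula.

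For the asymptotic expressions I would argue that the data‑dependent BH crossing index concentrates. Under $G_1$ the q‑values form a mixture of $M$ i.i.d. draws from the flipped Byzantine p‑value law and $N-M$ i.i.d. draws from the honest p‑value law; by a Glivenko–Cantelli argument their empirical cdf converges uniformly to the deterministic mixture cdf $G_V$, and the BH fixed point $d/N$ converges to a deterministic level $t^\star$, the largest $t$ with $\gamma\,G_V(t)\ge t$. Consequently, in the limit each sensor's local decision becomes the deterministic rule ``report $1$ iff $V_j\le t^\star$,'' so the decisions are asymptotically independent Bernoulli: a Byzantine reports $1$ with probability $P(1-U\le t^\star)=P(U\ge 1-t^\star)$, whose position‑average is $\bar p_D^B$, and an honest sensor with probability $P(U\le t^\star)$, whose position‑average is $\bar p_D^H$. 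Summing $M$ Bernoullis of parameter $\bar p_D^B$ and $N-M$ of parameter $\bar p_D^H$ produces the convolution‑of‑binomials form; if instead each sensor is independently a Byzantine with probability $\alpha$, every sensor reports $1$ with the common probability $\bar p_D=\alpha\bar p_D^B+(1-\alpha)\bar p_D^H$, which collapses the convolution to the single Binomial$(N,\bar p_D)$ in the last line.

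I expect two steps to carry the real difficulty. The first is bookkeeping in the exact formula: because the p‑value densities under $G_1$ are \emph{not} identical across sensors (in‑range honest, out‑of‑range honest, and in‑range Byzantine sensors have distinct laws, while out‑of‑range Byzantines remain uniform), the joint density of the order statistics is a permanent rather than $N!\prod_j f_{V_j}(v_j)$, so the leading ``$\sum$'' in \eqref{Eq:FDR_G1} must be read as also ranging over the relevant permutations; I would either keep the permanent in symbolic form or exploit the fact that there are only finitely many distinct density types to organize the sum as a multinomial over those types. The second, and the genuine analytic obstacle, is justifying the concentration of the BH crossing level $t^\star$ — that the randomized, data‑dependent threshold of the procedure is asymptotically deterministic and that the induced local decisions become asymptotically independent. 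Here I would lean on the established large‑$N$ behaviour of the Benjamini–Hochberg procedure (cf.\ \cite{benj95,finner02} and the count‑statistic asymptotics in \cite{ray_aes11}), adapted to the two‑population mixture created by the Byzantines; the remainder is a routine extension of the $G_0$ computation in Proposition~\ref{prop_FDR_G0}.
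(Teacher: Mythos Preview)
Your proposal is correct and follows the same line as the paper's proof: the barrier-crossing event for the ordered q-values, integration of the order-statistic density over that region, averaging over the $\binom{N}{M}$ Byzantine assignments for the exact formula, and a fixed-threshold reduction for the asymptotics. Two remarks on the points you flag as difficult. First, the paper does not engage with the permanent issue at all: after marginalizing over sensor positions it simply writes the joint density of the ordered q-values as $N!\prod_j f_{V_j}(v_j)$ and leaves the unlabeled $\sum$ in \eqref{Eq:FDR_G1} to carry the averaging over Byzantine configurations, so your reading that the sum must also absorb the permutation structure is the right way to make the formula precise. Second, for the concentration of the BH threshold the paper simply invokes the Genovese--Wasserman result \cite{genovese&wasserman02} that the BH procedure is asymptotically equivalent to thresholding at the fixed level $v^\star$ solving $F(v)=\beta v$ with $\beta=(1/\gamma-A_0)/(1-A_0)$; this $v^\star$ is exactly your $t^\star$ once the mixture cdf is unpacked, so the ``genuine analytic obstacle'' you anticipate is handled by citation rather than by a fresh Glivenko--Cantelli argument.
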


\begin{proof}
The proof is provided in Appendix \ref{proof_FDR_G1}

\end{proof}

It is interesting to observe here that the Byzantines only affect the pmf of the count statistic ($\Delta$) under $G_1$, while the pmf under $G_0$ remains the same. The reason behind this is that the p-values under $G_0$ are uniformly distributed and a transformation $h(p)=1-p=q$ still keeps the q-values uniformly distributed under $G_0$. However, the pmf under $G_1$ changes as the p-values are no longer uniformly distributed.  

\subsection{Numerical Results}
In this section, we provide numerical and simulation results to validate the analytical expressions obtained for the pmf of the count statistic under FDR based threshold design. Tables \ref{tab:FDRal0}, \ref{tab:alFDR0.5} and \ref{tab:alFDR1} show the numerical and simulation results of analytically derived $P(\Delta = i; G_1)$ for the FDR based scheme for different values of $\alpha$. The signal and target parameters are: $P_0=3$, $R=10$, $d_0=3$, $N=4$ and FDR parameter $\gamma=0.1$. The integrals given in \eqref{Eq:FDR_G1} have been evaluated using Monte Carlo integration methods \cite{press92}. The simulation results are for $5 \times 10^5$ Monte Carlo runs. These tables show that the numerical and simulation results match very closely. Also observe that the probability that count is `0', $P(\Delta=0;G_1)$, increases with $\alpha$ while $P(\Delta=i;G_1)$ for $i\neq 0$ decreases with $\alpha$, suggesting a shift of the distribution towards `0'.

\begin{table}[htb]
\caption{Numerical and Simulation results for $P(\Delta =i;G_1)$ for control of FDR when $\alpha=0$} \centering
\begin{tabular}{|c| c c c c c|}
\hline
   Count & 0  & 1 & 2 & 3 & 4\\
   \hline 
   $P(\Delta;G_1)$ (Numerical) & 0.7777 & 0.1777 & 0.0407 & 0.0061 & 3.1544 x $10^{-4}$\\ 
   \hline
   $P(\Delta;G_1)$ (Simulations)& 0.7756 & 0.1773 & 0.0401 & 0.0065 & 5.1000 x $10^{-4}$\\
   \hline
\end{tabular}
\label{tab:FDRal0}
\end{table}

\begin{table}[h]
\caption{Numerical and Simulation results for $P(\Delta =i;G_1)$ for control of FDR when $\alpha=0.5$} \centering
\begin{tabular}{|c| c c c c c|}
\hline
   Count & 0  & 1 & 2 & 3 & 4\\
   \hline 
   $P(\Delta;G_1)$ (Numerical) & 0.8355 & 0.1347 & 0.0227 & 0.0027 & 1.8121 x $10^{-4}$\\ 
   \hline
   $P(\Delta;G_1)$ (Simulations)& 0.8383 & 0.1353 & 0.0233 & 0.0028 & 1.7000 x $10^{-4}$\\
   \hline
\end{tabular}
\label{tab:alFDR0.5}
\end{table}

\begin{table}[h]
\caption{Numerical and Simulation results for $P(\Delta =i;G_1)$ for control of FDR when $\alpha=1$} \centering
\begin{tabular}{|c| c c c c c|}
\hline
   Count & 0  & 1 & 2 & 3 & 4\\
   \hline 
   $P(\Delta;G_1)$ (Numerical) & 0.9022 & 0.0789 & 0.0106 & 0.0012 & 6.2896 x $10^{-5}$\\ 
   \hline
   $P(\Delta;G_1)$ (Simulations)& 0.9084 & 0.0793 & 0.0111 & 0.0011 & 6.4000 x $10^{-5}$\\
   \hline
\end{tabular}
\label{tab:alFDR1}
\end{table}

In Fig. \ref{asymp_all}, we provide the simulation results and the analytical approximation for the pmf of the count statistic $P(\Delta;G_1)$ for a large number of sensors in the ROI. The simulation parameters are $N=500$, $P_0=15$, $R=10$, $d_0=3$ and FDR parameter $\gamma=0.0077$. The simulation results are for $5 \times 10^4$ Monte Carlo runs. The simulation results demonstrate that the asymptotic expressions for $P(\Delta;G_1)$ match the simulation results very well. Again note that the distributions move closer to `0' as $\alpha$ increases suggesting a decrease in the number of detections.

\begin{figure}[htb]
\centering
\includegraphics[width = 3.5in,height=!]{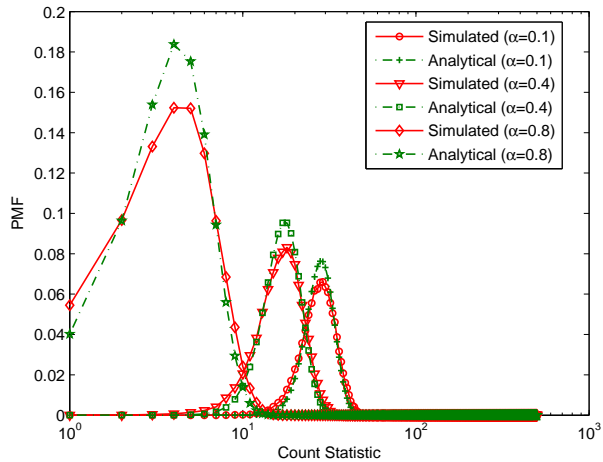}
\caption{Simulated and analytical results for $P(\Delta;G_1)$ for the FDR based scheme under asymptotic conditions when $\alpha=0.1,0.4,0.8$}
\label{asymp_all}
\end{figure}

%
%

Fig. \ref{Pd_alpha_pf_01_adap} shows the reduction in detection performance with an increase in the number of Byzantines in the network. The simulation parameters are: $R=10$, $d_0=5$, $P_0=5$, $N=20$, system-wide probability of false alarm is fixed at $P_{FA}=0.1$. This yields optimal FDR parameter as $\gamma=0.25$. The simulation results are for $1 \times 10^4$ Monte-Carlo runs. This shows that the Byzantines reduce the power of the test (detection probability) even though the FDR value is maintained at the prescribed threshold. This leads to the interesting question regarding the optimality of the current transformation $h(\cdot)$ for the Byzantines. This analysis is much more complex and will be explored in the future.

We also compared the detection performance of FDR based threshold scheme and identical threshold based scheme in the presence of Byzantines. In our previous work \cite{ray_aes11}, we have shown that the FDR based threshold scheme performs better than the identical threshold scheme in the absence of Byzantines. Identical threshold scenario under Byzantine attack has been investigated by us \cite{rawat_tsp11}. The identical threshold on the observations is represented as $\tau$ or, equivalently, the threshold on the p-values is $p_{fa}=Q(\tau)$. We skip the details of the optimal threshold ($\tau$ or $p_{fa}$) design since it is not the focus of this paper\footnote{Details of optimal design under identical threshold design in the presence of Byzantines will be published as a technical report.}. However, we present a numerical comparison of the two schemes. Fig.~\ref{ROC_04} presents a comparison of the two schemes when the fraction of Byzantines, $\alpha=0.4$. The simulation parameters are: $N=20$, $R=10$, $d_0=5$, $P_0=5$, the optimal FDR parameter $\gamma=0.1$ and the optimal identical threshold parameter $Q(\tau)=p_{fa}=0.005$.

\begin{figure}[htb]
\centering
\includegraphics[width = 3.5in,height=!]{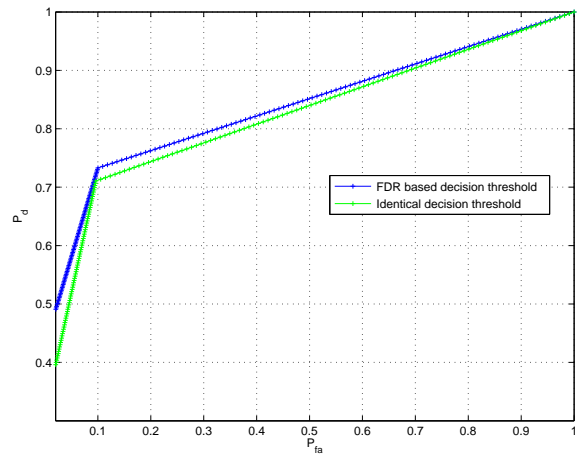}
\caption{ROC using the optimal parameter values found using K-S distance when $\alpha=0.4$}
\label{ROC_04}
\end{figure}

\section{Adaptive FDR based distributed detection}
\label{adap}
In this section, we first demonstrate that the optimal value of the design parameter for FDR based scheme depends on the fraction of Byzantines ($\alpha$). If prior information is available regarding $\alpha$, we can design our system such that the performance is optimized for the given value of $\alpha$. However, in a dynamically changing environment, it becomes important that we learn this fraction ($\alpha$) over time and change our system design parameter values adaptively. In this section, we propose an adaptive algorithm which learns the maliciousness of network over time and changes the values of design parameters to improve the system's detection performance in a dynamic manner. In other words, we learn the effect of Byzantines on the network and reduce their effect by adaptively changing the values of system parameters.

\subsection{Design of optimal parameter value}
\label{opt}
In this subsection, we first give design guidelines for FDR based distributed detection in the presence of Byzantines. For a fixed system-wide probability of false alarm given by
\begin{equation}
P_{FA}=P(\Delta > T;G_0)+\kappa P(\Delta =T;G_0)
\end{equation}
where $T$ is the global threshold for the count statistic used at the FC and $\kappa$ is the randomization parameter, the optimal value of local threshold parameter $\gamma$ is found by maximizing the system-wide probability of detection ($P_D$). $P_D$ is given by
\begin{equation}
P_D=P(\Delta > T;G_1)+\kappa P(\Delta =T;G_1)
\end{equation}
where the pmf of the count statistic is given by Propositions \ref{prop_FDR_G0} and \ref{prop_FDR_G1}. As can be seen from Proposition \ref{prop_FDR_G0}, the distribution of the count statistic under $G_0$ does not depend on $\alpha$ and, therefore, the expression of $P_{FA}$ remains the same irrespective of the value of $\alpha$. In this section, we show through simulations that the optimal parameter value for FDR based approach varies with $\alpha$. Intuitively, the Byzantines decrease the distance (such as KL divergence or KS distance) between the pmfs of the count statistic under global hypotheses $G_0$ and $G_1$. It is, therefore, important to re-optimize the values of local threshold parameters to increase the distance between these pmfs as much as possible for this fixed value of $\alpha$. 

For system and target parameters given by $N=20$, $R=10$, $P_0=5$, $d_0=5$ and $P_{FA}=0.1$, we have shown in Section \ref{DM} that the optimal value of local threshold parameter is $\gamma=0.25$ for the FDR based scheme. For the FDR based algorithm, the optimal values of $\gamma$ changes with $\alpha$ from $\gamma_{low}=0.25$ to $\gamma_{high}=0.1$ as shown in Fig. \ref{pd_gamma}. The simulations were performed for several values of $\alpha$ ranging between 0 and 1 but a number of curves have been omitted for the sake of brevity.




\begin{figure}[htb]
\centering
\includegraphics[width = 3.5in,height=!]{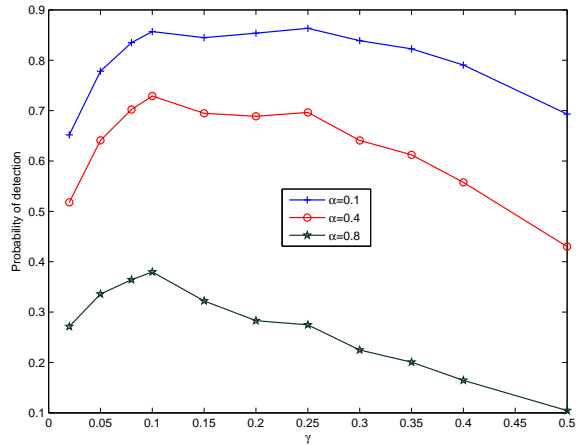}\vspace{-0.75cm}
\caption{Probability of detection versus FDR local parameter $\gamma$ when $P_{FA}=0.1$ and for varying $\alpha$}
\label{pd_gamma}
\end{figure}



From extensive simulations, it was observed that the optimal parameter value remains nearly constant for different intervals of $\alpha$. In the particular example chosen, for $\alpha \leq  0.2$, $\gamma_{opt}=\gamma_{low}=0.25$ and for $\alpha > 0.2$, $\gamma_{opt}=\gamma_{high}=0.1$. Using the optimal parameter values, we determine the probability of detection as a function of $\alpha$ via simulation. Fig. \ref{Pd_alpha_pf_01_adap} shows the improvement in detection performance of the FDR based scheme using adaptive optimal parameter values. From Fig. \ref{Pd_alpha_pf_01_adap}, we can also observe that the FDR based scheme using fixed parameter values fails under severe Byzantine attack ($\alpha =1$) since it performs worse than a random decision making mechanism which gives $P_D=P_{FA}$. However, under severe attack, the adaptive FDR based scheme becomes equivalent to this random decision making mechanism with probability of detection being equal to the probability of false alarm. This value of Byzantine fraction is analogous to $\alpha_{blind}$ used in the literature \cite{rawat_tsp11}, which is the fraction of Byzantines needed to make the two hypotheses indistinguishable. For a typical distributed detection system, this value has been shown to be $0.5$ under the assumption of identical sensor decision rules. Our present results show that the FDR based scheme has a higher blinding fraction ($\alpha_{blind}=1$) which shows the robustness of the FDR based distributed detection framework.

\begin{figure}[htb]
\centering
\includegraphics[width = 3.5in,height=!]{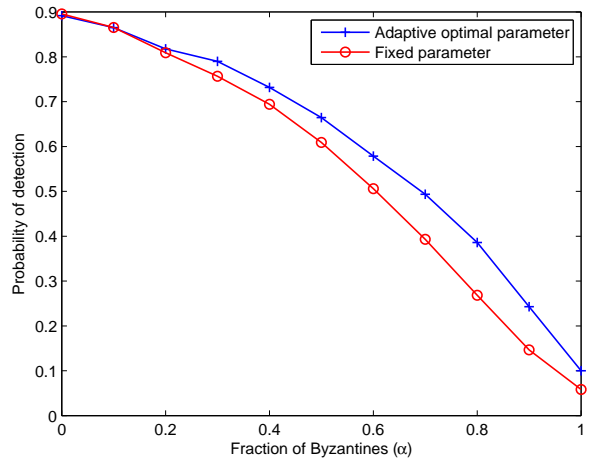}
\caption{Probability of detection against the fraction of Byzantines when the true hypothesis is $G_1$}
\label{Pd_alpha_pf_01_adap}
\end{figure}


%

Since we have shown that the optimal parameter value depends on the fraction of Byzantines present in the network, it becomes important to learn this parameter to adapt the system using the parameters at hand (local thresholds). We use a modified Kolmogorov-Smirnov Test, proposed in the following sub-section, to learn the fraction of Byzantines present in the network.

\subsection{Modified Kolmogorov-Smirnov Test}
\label{KStest}
Kolmogorov-Smirnov (K-S) test \cite{chakra67} is a goodness-of-fit test which compares a sample observed data with a reference null hypothesis distribution. It quantifies a distance metric (Kolmogorov-Smirnov distance) between the sample empirical cumulative distribution function (cdf) and the null hypothesis cdf to decide the goodness-of-fit. It is typically used only for continuous cdfs but Conover \cite{conover72} has extended this to cover the case of discontinuous cdfs. Since the only information known to the FC at every time instant is the count statistic ($\Delta$), a goodness-of-fit test on the count statistic has to be used to decide the range of $\alpha$. 

\subsubsection*{Description of the test \cite{conover72}}
Let $X_1$,  $X_2$,$\cdots$, $X_n$ represent a random sample of size $n$. Denote the null hypothesis by
\begin{equation}
K_0: F(x) = H(x) \qquad \text{for all $x$},
\end{equation}

where $F(x)$ is the unknown population distribution function, and $H(x)$ is the hypothesized distribution function with all parameters specified. $H(x)$ may be continuous, discrete, or a mixture of the two types. 
Let  $S_n(x)$ represent the empirical distribution function, 
\begin{equation}
S_n(x)  =  \frac{1}{n} (\text{the number of $X_i$'s which are $\leq x$}), \qquad \text{for all $x$}.
\end{equation}

The algorithm proposed by Conover \cite{conover72} gives a critical value of the sample which quantifies the confidence of null hypothesis being true. The test statistic $D$ is defined as 
\begin{equation}
D=\sup_x{|H(x)-S_n(x)|}
\end{equation}

In our case, since the optimal parameter value remains relatively fixed over different regions of $\alpha$, we can approximate it by a staircase function. If $Q$ is the number of regions of $\alpha$, we have a $Q$-ary hypothesis test. We compare the sample with the distribution of the count statistic under each of these regions

\begin{eqnarray}
K_i: P(\Delta;G_1)\qquad \text{for $\alpha=\alpha_{i}$}\quad \text{for $i=1,\cdots,Q$}
\end{eqnarray}

where $Q$ is the number of regions and $\alpha_i$ represents the $i^{th}$ region of $\alpha$.

Here, we modify the algorithm provided by Conover \cite{conover72} to generate test statistics $D_i$ for each of the hypotheses $(K_i$). Since we have the distribution of the count statistic for each of the hypotheses (using the analytical expressions derived in Section \ref{pmf}), we can find the test statistics under all the hypotheses using the given sample data. We then decide the hypothesis $H_i$ for which the $D_i$ is larger. The advantage of using the K-S test is that it performs well even with relatively small number of samples (e.g., 20-30 samples) compared to Pearson's Chi-Square test \cite{plackett83} which requires a larger number of samples.


\subsection{Adaptive Algorithm}
In this subsection, we propose an adaptive algorithm based on the modified K-S test described above. In this algorithm, at every time instant $t$, the FC stores the count value of the previous $T_0$ time instants for which the global decision of $G_1$ was made. Using these $T_0$ data samples, it makes a decision regarding the region in which $\alpha$ lies using the modified K-S test described in Section \ref{KStest}. Depending on the decision made, it changes the detector parameter values. Let $K_i$ denote the decision made using the modified K-S test, then the detector parameter values are changed to $\gamma=\gamma_{i}, T=T_{i}, \kappa=\kappa_{i}$. The global threshold parameters $(T,\kappa)$ may also need to be changed in order to maintain the system-wide false alarm probability $P_{FA}$ at the desired value.

We now provide simulation results of our proposed adaptive algorithm. The system and target parameters are: $N=20$, $R=10$, $P_0=5$, $d_0=5$ and $P_{FA}=0.1$. This gives us the optimal FDR parameter values varying in $Q=2$ regions. The optimal parameter values are $\gamma_{1}=0.25$ and $\gamma_{2}=0.1$. For the K-S hypothesis test, we have used $T_0=30$ samples and the distributions for $\alpha_{1}=0$ and $\alpha_{2}=0.5$ under the two hypotheses have been found using (\ref{Eq:FDR_G1}) given in Section \ref{pmf}. The system is initially Byzantine free, i.e., $\alpha=0$. At $t=30$, $\alpha$ changes to $0.7$. In Fig. \ref{adaptive_0to7at30}, the global detection probability is plotted against time for the proposed adaptive algorithm and a non-adaptive algorithm which continues to use the initial detector parameters. As can be observed, the detection performance improves when the adaptive algorithm is used.

\begin{figure}[htb]
\centering
\includegraphics[width = 3.5in,height=!]{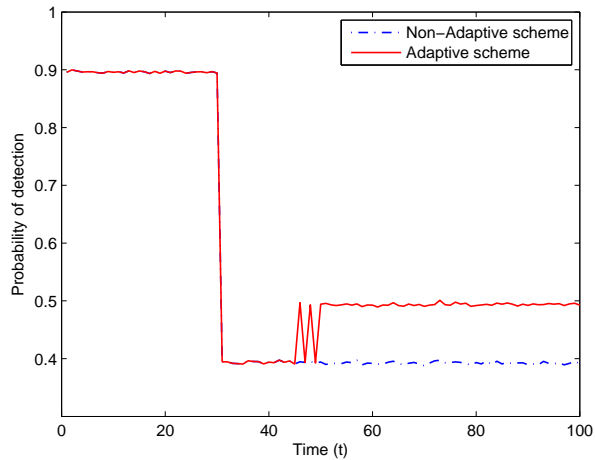}
\caption{Probability of detection versus time when $\alpha$ changes from 0 to 0.7 at $t=30$}
\label{adaptive_0to7at30}
\end{figure}

It is important to note that in the above example there were two regions for $\alpha$. However, the basic idea of learning $\alpha$ can be used when there are $Q$ regions for $\alpha$. 

\section{Discussion}
\label{conc}
In this section, we discuss the results presented in this paper with a focus on their implications and some implementation issues. We considered the problem of FDR based distributed detection in the presence of Byzantines. We first re-visited the work of \cite{ray_aes11} and observed that deflection coefficient is not the best heuristic for the design of FDR based distributed detection framework in non-asymptotic cases. Hence, we explored other possible distance measures that can better serve as design heuristics. Through empirical studies and analytical justifications, we showed that system performance can be improved by the use of Kolmogorov-Smirnov distance as the design heuristic. The advantage of using distance measures is the simplicity associated with its implementation in practice. The optimization is performed offline by finding the optimal parameter value through brute-force search as illustrated through simulations in this paper. The only computationally expensive step in evaluating the optimal parameter value is computation of count distribution under $G_0$ and $G_1$. Under $G_0$, it takes linear time to find the distribution since we have the closed form expressions derived in the paper. Under $G_1$, evaluating the distribution analytically involves multi-dimensional integration and, therefore, is of polynomial time complexity. Note that the distribution can also be approximated using Monte-Carlo runs. For a large number of sensors, we can use the asymptotic expressions derived in the paper which can be computed in linear time. Also, the distributed algorithm used in this framework reduces the energy and bandwidth consumption since the local sensors only report their 1-bit decisions to the FC rather than their p-values.

We also explored the system in the presence of malicious sensors (Byzantines). We modeled the Byzantines' attack strategy to ensure covertness in its behavior (since FDR value is still controlled at the pre-determined threshold $\gamma$) while degrading the system performance in terms of detection probability. We analyzed the system performance both theoretically and numerically. We observed that the optimal parameter value for the system depends on the fraction of Byzantines present in the system. The system performance degrades under severe attacks when fixed parameter values are used and, therefore, we proposed an adaptive approach to improve the performance which degraded in the presence of Byzantines. The sensors are deployed in a dynamically changing environment and, therefore, an adaptive scheme is necessary to combat the adversaries in the network. The proposed scheme learns the fraction of Byzantines present in the network and adaptively changes system parameter values to improve the global detection performance. 

There are several directions for future work on this problem. One could explore other distance measures \cite{basseville89}  that can be used as heuristics for system design. The optimal attack strategy for the Byzantines needs to be derived as it would be interesting to see how the performance of the network depends on the optimal attack strategy of the Byzantines defined by $h_{opt}(\cdot)$. Here, we considered the Neyman-Pearson framework. This work could be extended to a Bayesian framework where the problem is to detect the presence of a random target. For this, one may need to employ the Bayesian version of FDR called Bayesian FDR \cite{Whittemore07} or pFDR \cite{Storey03}. In \cite{Whittemore07}, an approach for control of Bayesian FDR has been proposed which can be used to design local sensor thresholds in distributed detection under the Bayesian framework similar to \cite{ray_aes11} and the present work.

\appendices
\section{Proof of Proposition \ref{prop_FDR_G1}}
\label{proof_FDR_G1}
For a Gaussian random variable $N(\phi,1)$, the pdf of a p-value $u$ of a true sensor observation located at a radial distance $r$ from the target is given by 
\begin{equation}
f(u;r)=\exp\left(-\frac{\phi^2}{2}\right)\exp(\phi Q^{-1}(u)),	\text{$0 \leq u \leq 1$}
\end{equation}
where $\phi$ = 0 if $r > d_0$ and $\phi$ = $\sqrt{P_0}$ if $r \leq d_0$.
The marginal pdf of the p-values in the presence of target can be found as 
\begin{align}
f_U(u)&=\int_0^{d_0}\exp\left(-\frac{P_0}{2}\right)\exp(\sqrt{P_0}Q^{-1}(u))f_R(r)dr + \int_{d_0}^{R}f_R(r)dr\\
&=\frac{d_0^2}{R^2}\exp\left(-\frac{P_0}{2}\right)\exp(\sqrt{P_0}Q^{-1}(u)) + \Bigg(1-\frac{d_0^2}{R^2}\Bigg)
\end{align}

where $f_R(r)=\frac{2r}{R^2}$ for $0 \leq r \leq R$ has been used.

This gives the marginal pdf of the transformed p-values $v$ as
\begin{equation}
\label{marginal_pdf}
f_V(v)=
\begin{cases}
f_U(v)	&	\text{if $i$ is an honest sensor}\\
f_U(1-v)	&	\text{if $i$ is a Byzantine sensor}
\end{cases}
\end{equation}

Under the assumption of independent p-values of the sensor observations, the transformed p-values $v$ are also independent. The FDR control algorithm requires ordering of these transformed p-values denoted by $v_{1,N}\leq v_{2,N} \cdots \leq v_{N,N}$ and are correlated (due to the ordering) with joint pdf given by
\begin{equation}
f_{V_{1,N}V_{2,N}\cdots V_{N,N}}=N!f_{V_1}(v_1)f_{V_2}(v_2)\cdots f_{V_N}(v_N)	\qquad \text{$0 \leq v_{1,N} \leq v_{2,N} \leq \cdots \leq v_{N,N}$}
\end{equation}
where the marginal density $f_{V_i}(v_i)$ is given by (\ref{marginal_pdf})

These ordered transformed p-values are compared against linearly decreasing thresholds to get the number of detections and, therefore, the probability can be found as
\begin{multline}
P(\Delta = i; G_1)= \int_{v_{N,N}=\gamma}^1 \cdots \int_{v_{i+1,N}=(i+1)\gamma/N}^{v_{i+2,N}}\int_{v_{i,N}=0}^{i\gamma /N}\\
\cdots \int_{v_{1,N}=0}^{v_{2,N}}N!f_{V_1}(v_1)f_{V_2}(v_2)\cdots f_{V_N}(v_N)dv_{1,N}dv_{2,N}\cdots dv_{N,N}
\end{multline}

However, since any of the $M$ sensors can be Byzantines, we need to take an average over the $\binom{N}{M}$ possibilities which gives the desired result.

For a large number of sensors, we can derive the approximate distribution of the count statistic $\Delta$ under $G_1$ using the result by Genovese and Wasserman \cite{genovese&wasserman02} which states that asymptotically the Benjamini-Hochberg method corresponds to classifying as $H_1$ all p-values that are less than a particular threshold $v^*$, where $v^*$ is the solution to the equation
\begin{equation}
F(v)=\beta v
\end{equation}
and 
\begin{equation}
\beta=\frac{\frac{1}{\gamma}-A_0}{1-A_0}
\end{equation}
Here $F(v)$ is the c.d.f of the transformed p-values under $H_1$ and is assumed to be strictly concave, and $A_0$ is the fraction of true $H_0$s. This threshold $v^*$ is found by assuming a mixture model of the distribution of p-values. For honest sensors, $F_H(v)=Q(Q^{-1}(v)-\phi)$ and for Byzantine sensors $F_B(v)=1-Q(Q^{-1}(1-v)-\phi)$, where $\phi=\sqrt{P_0}$. So, under the mixture model, we have $F(v)=\alpha F_B(v)+(1-\alpha)F_H(v)$. For a large ROI, on an average $d_0^2/R^2$ fractions of sensors receive the signal, and therefore $A_0=1-d_0^2/R^2$.

Hence, the average probability of detection of an honest sensor is given by
\begin{align}
\bar{p}_D^H&=(1-A_0)P(V<v^*|H_1)+A_0P(V<v^*|H_0)\\
&=(1-A_0)\int_0^{v^*}f_\phi(u)du + A_0v^*
\end{align}
where $f_\phi(u)$ is given by 

\begin{equation}
\label{fu}
f_\phi(u)=\exp\left(-\frac{\phi^2}{2}\right)\exp(\phi Q^{-1}(u)), \qquad \text{$0 \leq u \leq 1$}
\end{equation}

Similarly, the probability of detection of a Byzantine sensor is given by

\begin{align}
\bar{p}_D^B&=(1-A_0)P(V<v^*|H_1)+A_0P(V<v^*|H_0)\\
&=(1-A_0)P(1-U<v^*|H_1)+A_0P(1-U<v^*|H_0)\\
&=(1-A_0)\int^1_{1-v^*}f_\phi(u)du + A_0v^*
\end{align}

The probability of $\Delta =i$ detections (counts) when the target is present is provided by
\begin{equation}
P(\Delta = i; G_1)= \sum_{k=\max{(0,M-N+i)}}^{\min{(M,i)}}\binom{M}{k}\binom{N-M}{i-k}(\bar{p}_D^B)^k(1-\bar{p}_D^B)^{M-k}(\bar{p}_D^H)^{i-k}(1-\bar{p}_D^H)^{(N-M)-(i-k)}
\end{equation}
or 
\begin{equation}
P(\Delta = i; G_1)= \binom{N}{i}(\bar{p}_D)^i(1-\bar{p}_D)^{N-i}
\end{equation}
where $\bar{p}_D$ is the average probability of detection of a sensor given by
\begin{equation}
\bar{p}_D=\alpha \bar{p}_D^B + (1-\alpha) \bar{p}_D^H
\end{equation}

Also, we can further approximate this expression using DeMoivre-Laplace theorem as a Gaussian distribution
\begin{equation}
P(\Delta =i;G_1)\approx \frac{1}{\sqrt{2\pi N\bar{p}_D(1-\bar{p}_D)}}\exp(\frac{(i-N\bar{p}_D)^2}{2 N\bar{p}_D(1-\bar{p}_D)})
\end{equation}

%
%
%
%
%
%
%
%
%

\bibliographystyle{IEEEtran}
\bibliography{Conf,Book,Journal}
\end{document}